\newtheorem{theorem}{Theorem}
\newtheorem{dfn}{Definition}
\newtheorem{prop}{Proposition}
\newtheorem{remark}{Remark}
\begin{document}

\title[Non-hermitian Hamiltonian from Higgs algebra]{Construction of a new three boson non-hermitian Hamiltonian associated to deformed Higgs algebra : real eigenvalues and Partial ${\mathcal {PT}}$-symmetry }

\author{Arindam Chakraborty}

\address{Department of Physics, Heritage Institute of Technology, Kolkata-700107, India}
\ead{arindam.chakraborty@heritageit.edu}
\vspace{10pt}
\begin{indented}
\item[]November 2021
\end{indented}

\begin{abstract}
A $\gamma$-deformed version of $\mathfrak{su}(2)$ algebra 
has been obtained from a bi-orthogonal system of vectors in $\bf{C^2}$.  Fusion of Jordan-Schwinger realization of complexified $\mathfrak{su}(2)$ with Dyson-Maleev representation gives a 3-boson realization of Higgs algebra of cubic polynomial type. The non-hermitian Hamiltonian thus obtained is found to have real eigenvalues and  eigen states with symmetry induced orthogonality. The notion of partial ${\mathcal {PT}}$-symmetry (henceforth $\partial_{\mathcal { PT}}$) has been introduced as a characteristic feature of these multi-boson realizations.  The Hamiltonian along with its eigenstates have been studied in the light of $\partial_{\mathcal { PT}}$-symmetry. The possibility
of $\partial_{\mathcal { PT}}$-symmetry breaking is also discussed. The deformation parameter $\gamma$ plays a crucial role in the entire formulation and non-trivially modifies the eigenfunctions under consideration. 
\end{abstract}

%
% Uncomment for keywords
%\vspace{2pc}
\noindent{\it Keywords}: Non-hermitian operator, bi-orthogonal vectors, Jordan-Schwinger map, Dyson-Maleev realization, Higgs algebra, Partial ${\mathcal {PT}}$-symmetry.

%
% Uncomment for Submitted to journal title message
%\submitto{\JPA}
%
% Uncomment if a separate title page is required
%\maketitle
% 
% For two-column output uncomment the next line and choose [10pt] rather than [12pt] in the \documentclass declaration
%\ioptwocol
%

\section{Introduction}
The present communication attempts to discuss three interrelated issues regarding multi-boson realization of deformed linear and non-linear (polynomial) algebras.

I. Implementing the idea of boson realization to a deformed version of $\mathfrak{su}(2)$ algebra involving non-hermitian generators obtained from a bi-orthogonal system of vectors. Subsequent complexification of the said algebra with the introduction of a new set of ladder operators gives the so called spectrum generating algebra (SGA).

II. Obtaining a version of deformed Higgs algebra from fusion of pair of such SGAs. As a consequence a 3-boson non-hermitian Hamiltonian with real eigenvalues and orthogonal eigenstates becomes available.

III. Understanding  the non-hermitian Hamiltonian and its eigenstates in the light of  $\partial_{\mathcal { PT}}$-symmetry.

The present article stems from a few of recent studies done by Brody\cite{brody16, brody14}regarding the construction of non-hermitian su(2) generators with real eigenvalues starting from a bi-orthogonal set of vectors. It has been claimed\cite{brody14} that a parallel formalism of quantum mechanics, at least in the framework of finite dimensional Hilbert space, is possible even for relaxing the requirement of hermiticity of the observable. In our case a bi-orthogonal system has been formulated starting from
a pair of orthonormal bases and subsequent construction of a new set of vectors following \cite{baga15} with the help of a suitably chosen transformation. Such a bi-orthogonal system of vectors can be used to construct a set of generators of a Lie algebra which in the present
setting comes out as a parametric deformation of su(2) algebra. The related Jordan-Schwinger
operators $\{J_0^\gamma , J_\pm^\gamma \}$ have been constructed where the operator $J_0^\gamma$ becomes a non-hermitian operator.

It is interesting to note that the complexification of the deformed algebra is not spectrum generating in general for all non-zero values of $\gamma$. However, a $\gamma$-deformed version of complexification is possible at the level of Jordan-Schwinger realization which is spectrum generating in character and resembles the corresponding realization of su(2) for $\gamma = 0$.

Fusion of two such algebras \cite{sunil02} can produce the so called Higgs algebra which is one of the earliest candidates of Polynomial Angular Momentum Algebra(PAMA)\cite{ruan06, higgs79, debergh98}. It has been introduced in \cite{higgs79} to establish the existence of hidden symmetry for coulomb and oscillator potential in a space of constant curvature and also understood as a second order approximation of $su_q (2)$\cite{zhed92}. Here, we have defined a deformed version of Higgs algebra and considered the fusion of $\mathfrak{su}_{\gamma}(2)$ and a deformed version of Dyson-Maleev Realization. To the best of our knowledge such a \textbf{three boson realization of Higgs algebra} is not available in literature.
The related commutation relations result to a 3-boson Hamiltonians involving central elements of the algebras, all of which are non-hermitian operators. A theorem has been proposed to show  the possibility of a \textbf{block-diagonal representation} of the Hamiltonian acting on a homogeneous polynomial space of three indeterminates. The present article considers only two such spaces with degree of homogeneity n=2 and 3. The eigenvalues of the Hamiltonian are found to be real. The  orthogonality of the eigenstates are discussed (in a typical Bargmann-Fock type setting) in view of a \textbf{symmetry induced indefinite inner-product}. 

Since the identification of non-hermitian operators with real spectrum and Bender and Boettcher's\cite{bender98, bender99, bender02} attribution of this possibility to space-time reflection symmetry(${\mathcal {PT}}$-symmetry), so many studies have been undertaken in a variety of contexts relating to discrete symmetries\cite{brody16, mosta02} both in applicational\cite{zheng13, bitt12, rubin07, graefe08, krei16, klai08} and theoretical level\cite{brody17, mosta05}. 

In a recent article by Beygi et. al.\cite{beygi15} the issue of partial ${\mathcal {PT}}$-symmetry has been investigated for N-coupled harmonic oscillator Hamiltonian with purely imaginary coupling term whereas the reality and partial reality of the spectrum are claimed to have direct correspondences with the classical trajectories. The interpretation of all such symmetries can be understood both at the level of Hamiltonian as well as in
its eigenstates\cite{bender02, baga15}. Here we have introduced the notion of partial ${\mathcal {PT}}$-symmetry in relation to boson operators\cite{chakra20, chakra21} which eventually helps us to understand the presence of the same symmetry in the non-hermitian Hamiltonian obtained thus far. 

The notion of $\partial_{\mathcal { PT}}$-symmetry has been introduced through symmetry operators  of different orders and conformity of the Hamiltonian to such symmetry has been understood in terms of commutation relations. It is also observed that in the present situation a choice of eigenvector space is possible where the states can be classified into symmetry conforming, symmetry breaking and symmetry adopting states.

%The indefiniteness of the inner-product space considered here prompts us to investigate the possibility of \textbf{Krein space} structure of the eigenvector space. First, it has been done for spaces corresponding to each of the constituent blocks of the Hamiltonian and finally it is observed that there exists one or more universal symmetry operator for the entire eigenvector space. As the relevant Krein space has finite degree of indefiniteness it also resembles the structure of a \textbf{Pontryagin space}.

\section{Jordan-Schwinger map from bi-orthogonal system}
Given a pair of orthogonal vectors 
$\{\vert u_j\rangle = \frac{1}{\sqrt{2}}
\left(\begin{array}{c}
1 \\
(-1)^{j-1}
\end{array} \right): j = 1,2\}$, $\langle u_j\vert u_k\rangle = \delta_{jk}$ one can construct conventional $\mathfrak{su}(2)$ generators (which are hermitian operators in the usual sense of inner-product in $\mathbf{C^2}$) : 
$\sigma_m = \frac{i^{m+1}}{2}\sum_{j, k=1}^2c^{(m)}_{jk} \vert u_j\rangle \langle u_k\vert : m = 1, 2, 3.$. Here, $c_{jk}^{(1)} = (-1)^j \delta_{jk}$ and $c_{jk}^{(3)} = (-1)^j c_{jk}^{(2)} = (1 -(-1)^j c_{jk}^{(1)})$. The generators follow the commutation : $[\sigma_l , \sigma_m ] = i\sum_{n=1}^3\epsilon_{lmn} \sigma_n $. Now considering boson operators $\{a_j, a^{\dagger}_j : j=1, 2\}$ with $[a_j, a_k^{\dagger}]-\delta_{jk}=[a_j, a_k]=[a_j^{\dagger}, a_k^{\dagger}]=0$, the following Jordan-Schwinger (JS) bi-linear map is obtained as a Lie algebra isomorphism\cite{bieden81}
\begin{equation}
J_m = \frac{i^{m-1}}{2} \sum_{j, k=1}^2c_{jk}^{(4-m)}a_j^\dagger a_k : m = 1, 2, 3.
\end{equation}
Defining $J_{\pm} = J_1 \pm iJ_2$ and using $[a^\dagger_\alpha a_\beta$ , $a^\dagger_\mu a_\nu ] = a^\dagger_\alpha a_\nu \delta_{\beta\mu} -a^\dagger_\mu a_\beta \delta_{\alpha\nu}$ we obtain a spectrum generating algebra(SGA)(taking $J_3 = J_0$ ):
$[ J_0 , J_\pm ]  = \pm J_\pm$
and $[J_+ , J_-] = 2J_0$.

Now, it is possible to extend this procedure in the non-hermitian regime starting from the following definition of bi-orthogonal vectors:
\begin{dfn}
Two pairs of vectors $\{\vert\phi_j\rangle : j=1,2\}$ and $\{\vert\chi_j\rangle : j=1,2\}$ are called bi-orthogonal if $\langle\phi_j\vert\chi_k\rangle=0\forall j\neq k$.
\end{dfn}
Such a bi-orthogonal system can be obtained from straightforward verification of the following theorem.
\begin{theorem}
	Given any pair of vectors  
	$\{\vert v_j\rangle = 
	\left(\begin{array}{c}
	c_j^{(1)} \\
	c_j^{(2)}
	\end{array} \right): j = 1,2\}\in \mathbf{C^2}$ two pairs of vectors $\vert\phi_j\rangle=\omega_0T\vert v_j\rangle$ and $\vert\chi_j\rangle=(T^{-1})^{\dagger}\vert v_j\rangle$ constitute a bi-orthogonal system under the action of a transformation $T = (\cos \frac{\theta}{2} ){\bf 1}_2 +
	2(\cos \frac{\phi}{2}  \sin \frac{\theta}{2} )\sigma_1 - 2(\sin \frac{\phi}{2}  \sin \frac{\theta}{2} )\sigma_2$ with $\omega_0=\cos\theta$ provided $\langle v_j\vert v_k\rangle=(c_j^{(1)})^{\star}c_k^{(1)}+(c_j^{(2)})^{\star}c_k^{(2)}=0\forall j\neq k$.
\end{theorem}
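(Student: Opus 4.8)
The plan is to establish the single required relation $\langle\phi_j|\chi_k\rangle=0$ for $j\neq k$ by a direct evaluation of the inner product, exploiting the deliberate asymmetry in the construction: $|\phi_j\rangle$ is built from $T$ while $|\chi_k\rangle$ is built from $(T^{-1})^\dagger$. First I would pass to the bra associated with $|\phi_j\rangle$. Since the standard inner product on $\mathbf{C^2}$ is conjugate-linear in its first slot, the defining relation $|\phi_j\rangle=\omega_0 T|v_j\rangle$ gives $\langle\phi_j|=\omega_0^\star\langle v_j|T^\dagger$. Combining this with $|\chi_k\rangle=(T^{-1})^\dagger|v_k\rangle$ yields
\begin{equation}
\langle\phi_j|\chi_k\rangle=\omega_0^\star\,\langle v_j|\,T^\dagger (T^{-1})^\dagger\,|v_k\rangle .
\end{equation}

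The key observation is the operator identity $T^\dagger (T^{-1})^\dagger=(T^{-1}T)^\dagger=\mathbf{1}_2$, equivalently $(T^{-1})^\dagger=(T^\dagger)^{-1}$, which holds for any invertible $T$. This collapses the middle factor to the identity and leaves
\begin{equation}
\langle\phi_j|\chi_k\rangle=\omega_0^\star\,\langle v_j|v_k\rangle .
\end{equation}
The hypothesis $\langle v_j|v_k\rangle=0$ for $j\neq k$ then delivers the claimed bi-orthogonality at once, independently of the scalar $\omega_0=\cos\theta$, which influences only the diagonal normalization $\langle\phi_j|\chi_j\rangle=\omega_0^\star\langle v_j|v_j\rangle$.

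The one point genuinely requiring the explicit form of $T$ is the invertibility that makes $(T^{-1})^\dagger$ well-defined. I would therefore substitute the Pauli matrices $\sigma_1,\sigma_2$ to record the matrix form
\begin{equation}
T=\left(\begin{array}{cc}\cos\frac{\theta}{2} & 2\sin\frac{\theta}{2}\,e^{i\phi/2}\\ 2\sin\frac{\theta}{2}\,e^{-i\phi/2} & \cos\frac{\theta}{2}\end{array}\right),
\end{equation}
whence $\det T=\cos^2\frac{\theta}{2}-4\sin^2\frac{\theta}{2}$, nonzero for generic $(\theta,\phi)$ and thus guaranteeing the existence of $T^{-1}$. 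There is in truth no serious obstacle: the construction is engineered precisely so that the $T$ entering $|\phi_j\rangle$ and the $(T^{-1})^\dagger$ entering $|\chi_k\rangle$ are mutual adjoint-inverses, reducing the statement to the orthogonality already assumed for the $|v_j\rangle$. The only place to proceed with care is the adjoint bookkeeping—remembering the conjugate-linearity of the first slot, so that $T\mapsto T^\dagger$ and $\omega_0\mapsto\omega_0^\star$ in forming the bra—after which the cancellation is immediate.
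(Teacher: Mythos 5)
Your proof is correct and follows essentially the same route as the paper: write $\langle\phi_j|\chi_k\rangle$ as $\omega_0\langle v_j|\,(\cdots)\,|v_k\rangle$, collapse the middle operator to the identity, and invoke the assumed orthogonality of the $|v_j\rangle$. The one substantive difference is how the collapse is justified: the paper first uses the hermiticity $T=T^\dagger$ (which does hold here, since the coefficients in $T$ are real and $\sigma_1,\sigma_2$ are hermitian) to reduce $T^\dagger(T^{-1})^\dagger$ to $T(T^\dagger)^{-1}=TT^{-1}$, whereas your identity $T^\dagger(T^{-1})^\dagger=(T^{-1}T)^\dagger=\mathbf{1}_2$ needs only invertibility of $T$; your version is marginally more general and makes clear that the specific form of $T$ is irrelevant beyond invertibility. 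One small correction to your closing computation: the paper's $\sigma_1,\sigma_2$ are \emph{half} the standard Pauli matrices (this is why the factors of $2$ appear in the definition of $T$), so the off-diagonal entries of $T$ are $\sin\frac{\theta}{2}e^{\pm i\phi/2}$ rather than $2\sin\frac{\theta}{2}e^{\pm i\phi/2}$, giving $\det T=\cos^2\frac{\theta}{2}-\sin^2\frac{\theta}{2}=\cos\theta=\omega_0$; hence $T$ is invertible exactly when $\omega_0\neq 0$, which is already required throughout the paper. This slip does not affect the validity of your argument.
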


\begin{proof}
	As $T=T^{\dagger}$ with the present sense of inner-product $\langle\phi_j\vert\chi_k\rangle=\langle v_j\vert\omega_0T(T^{-1})^{\dagger}\vert v_k\rangle=\langle v_j\vert\omega_0T(T^{\dagger})^{-1}\vert v_k\rangle=\omega_0\langle v_j\vert v_k\rangle$. Hence the theorem follows from the definition-1.
\end{proof}
 Now, taking $\vert v_j\rangle=\vert u_j\rangle$, $\phi=\pi$ and $\gamma=\sqrt{1-\omega_0^2}=\sin\theta$ a deformed algebra $\mathfrak{su}_{\gamma}(2)$ is obtained with the generators : $\sigma_m^{\gamma} = \frac{i^{m+1}}{2}\sum_{j, k=1}^2\frac{c_{jk}^{(m)}}{\omega_0^{\delta_{m2}}}\vert\phi_j\rangle \langle\chi_k \vert : m = 1, 2, 3$ with the commutator $[ \sigma_l^{\gamma} , \sigma_m^{\gamma}] = i\sum_{n=1}^3\epsilon_{lmn} (1 -\gamma^2 \delta_{n2} )\sigma_n^{\gamma}$. The corresponding JS-map is given by $J_m^{\gamma} = \frac{i^{m-1}}{2}\sum_{j, k=1}^2[c_{jk}^{(4-m)} + i^m (1 - \delta_{m2} )\gamma c_{jk}^{(m)}] a^\dagger_j a_k : m = 1, 2, 3.$.

Introducing a new set of deformed ladder operators with exponent $p$, $p\in \mathbb{R}$ (set of real numbers),
$J_\pm^\gamma = \sum_{r=0}^1(\pm i)^r \omega_0^{r-p} J^\gamma_{r+1}$ and considering $J_3^\gamma = J_0^\gamma$
 the following spectrum generating algebra becomes possible

\begin{equation*}
	[J_0^\gamma, J_\pm^\gamma] = \pm\omega_0(\gamma) J_\pm^\gamma
\end{equation*}
\begin{equation}\label{deformed1}
[J_+^\gamma , J_-^\gamma] = 2\omega_0^{-2p+1}(\gamma)J_0^\gamma.  
\end{equation}
It is to be noted that $J_0^{\gamma}$ is no longer hermitian and $(J^{\gamma}_+)^{\dagger}\neq J^{\gamma}_-$. Casimir of this algebra ${\mathcal {C_J^\gamma}} = \omega_0^{-2} J_0^\gamma (J_0^\gamma \pm \omega_0 ) + \omega_0^{2p-2} J^\gamma_{\mp} J_\pm^\gamma$.

\section{Three-boson realization of Higgs algebra}
\begin{dfn}
A $\gamma$-deformed Higgs algebra $\mathfrak{H^{\gamma}}$ in present sense involving the generators $\{H^{\gamma}_0, H^{\gamma}_{\pm}\}$ is defined by the following commutation relations
\begin{equation}\label{commu1}
[H^{\gamma}_0, H^{\gamma}_{\pm}]=\pm \omega_0 H^{\gamma}_{\pm}
\end{equation}
and
\begin{equation}\label{commu2}
[H^{\gamma}_+, H^{\gamma}_{-}]=4\left[\alpha \omega_0H^{\gamma}_0+\frac{\beta}{\omega_0} (H^{\gamma}_0)^3\right]
\end{equation}
where, $\alpha$ and $\beta$ are functions of central elements of the algebra and $\alpha=\frac{\beta}{8}-\mathcal{H}$. 
\end{dfn}
\begin{remark}
	Infact Higgs algebra has been introduced in the context of non-relativistic Kepler problem in spaces with constant curvature $\pm\beta$ corresponding to hyperboloid and sphere respectively. $\mathcal{H}$ represents the Hamiltonian of the system. It is important to note that all the operators in definition-2 commute with $\mathcal{H}$.
\end{remark}

The function $\alpha$ represents the Hamiltonian of a system and being a central element all the generators commute with it by definition.

\subsection{\textbf{3-boson realization : Fusion of $\mathfrak{su}_{\gamma}(2)$ and a deformed Dyson-Maleev realization.}}

A three boson realization can be obtained from the fusion of $\mathfrak{su}_{\gamma}(2)$ and Dyson-Maleev single boson representation ($\mathfrak{dm}$) involving generators $\{M^{\gamma}_0, M^{\gamma}_{\pm}\}$ by defining the the following operators:
\begin{eqnarray}\label{higgsmaleev}
H^{\gamma}_0=\frac{1}{2}(J_0^{\gamma}-M^{\gamma}_0)\nonumber\\
H^{\gamma}_{\pm}=sJ_\pm^{\gamma}M^{\gamma}_\mp
\end{eqnarray}
and
\begin{equation}
L_0^{\gamma}=\frac{1}{2}(J_0^{\gamma}+M^{\gamma}_0)
\end{equation}
where, $M^{\gamma}_0=\omega_0(c-a_3^{\dagger}a_3)$, $M_+=a_3$ and $M_-=a_3^{\dagger}(2c-a_3^{\dagger}a_3)$ and the commutations $[M_0^{\gamma}, M^{\gamma}_{\pm}]=\pm \omega_0M^{\gamma}_{\pm}$ and $[M_+^{\gamma}, M^{\gamma}_{-}]=\pm \frac{2}{\omega_0}M^{\gamma}_{0}$ hold good along with the Casimir ${\mathcal {C_M^\gamma}} = \omega_0^{-2} M_0^\gamma (M_0^\gamma \pm \omega_0 ) +  M^\gamma_{\mp} M_\pm^\gamma$.  
The operators in equation-\ref{higgsmaleev} follow the commutations in equation-\ref{commu1} and equation-\ref{commu2}
with $p=1$ in the expression of $J_{\pm}^{\gamma}$, ${\mathcal {C_M^\gamma}}={\mathcal {C}}={\mathcal {C_J^\gamma}}$ and $s^2=-\beta$. This makes $\alpha=\frac{\beta}{8}-\mathcal{H}_1$ where, $\mathcal{H}_1\doteq \mathcal{H}_1(a_j, a_j^{\dagger}\mid j=1, 2, 3;\omega_0)=\frac{\beta}{8}+\beta\left(\frac{\mathcal{C}}{\omega_0^4}+\frac{(L_0^{\gamma})^2}{\omega_0^2}\right)$ is a $\textbf{3-boson non-hermitian Hamiltonian}$.

\subsection{\textbf{Spectrum of 3-boson Hamiltonian} ${\mathcal{H}_1\doteq \mathcal{H}_1(a_j, a_j^{\dagger}\mid j=1, 2, 3;\omega_0)}$}
The Hamiltonian has the following expression (taking $\beta=1$)
\begin{equation}
\mathcal{H}_1=\omega_0^{-2}\left(\sum_{i, j=1}^3B_{ij}a_i^{\dagger}a_j+\sum_{i, j, k, l=1}^3V_{ijkl}a_i^{\dagger}a_ja_k^{\dagger}a_l\right)+\rm{const.}
\end{equation}
where, $B_{11}=c\omega_0=-B_{22}, B_{11}=-2\omega_0^2c, B_{12}=C_{21}=ic\gamma\omega_0, B_{13}=B_{31}=B_{23}=B_{32}=0$ and $4V_{1111}=4V_{2222}=-2V_{1122}=1, 4V_{1212}=4V_{2121}=2V_{1221}=-\gamma^2, V_{3333}=\omega_0^2, V_{1112}=V_{1121}=-V_{2212}=-V_{2221}=V_{1211}=-V_{1222}=V_{2111}=-V_{2122}=\frac{i\gamma}{4}, V_{1133}=-V_{2233}=-\omega_0$ and $V_{1233}=V_{2133}=-i\gamma\omega_0$ while all the other $V_{ijkl}$ terms are zero. The Hamiltonian commutes with the number operator $N=a_1^{\dagger}a_1+a_2^{\dagger}a_2+a_3^{\dagger}a_3$ and with the replacements $\{a_j, a_j^{\dagger} : j=1, 2, 3\}\Leftrightarrow \{\zeta_j, \partial_{\zeta_j} : j=1, 2, 3\}$ it can be represented with the help of the following theorem.

\begin{theorem}
$\mathcal{H}_1$ has a block diagonal representation in the basis $\{\vert n_1, n_2, n_3\rangle\Leftrightarrow \zeta_1^{n_1}\zeta_2^{n_2}\zeta_3^{n_3} : n_1+n_2+n_3=n\}$ of the homogeneous polynomial space of degree $n$ with $n_1, n_2, n_3\in \mathbb{Z^+}\cup\{0\}$. For a given $n$, $\mathcal{H}_1=\bigoplus_{j=0}^{n}B^{(n)}_{(j+1)\times (j+1)}$, where $B^{(n)}_{j+1\times j+1}$ is a block diagonal matrix of order $(j+1)\times (j+1)$.
\end{theorem}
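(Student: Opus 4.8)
The plan is to exhibit two commuting grading operators that $\mathcal{H}_1$ preserves, and then to read off the block sizes from a dimension count of their joint eigenspaces. The first is the total number operator $N=a_1^{\dagger}a_1+a_2^{\dagger}a_2+a_3^{\dagger}a_3$, whose commutation with $\mathcal{H}_1$ is already noted above; this confines $\mathcal{H}_1$ to each homogeneous degree-$n$ space and supplies the coarse grading. The crux is to identify a second conserved quantity that refines this into the announced blocks of sizes $1,2,\dots,n+1$.

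First I would argue that $\mathcal{H}_1$ commutes with the third-mode number operator $N_3=a_3^{\dagger}a_3$. This is a direct inspection of the coefficient data: in the bilinear part the only entry touching mode $3$ is diagonal, since $B_{13}=B_{31}=B_{23}=B_{32}=0$; in the quartic part the only nonzero couplings carrying an index $3$ are $V_{3333}$, $V_{1133}$, $V_{2233}$, $V_{1233}$ and $V_{2133}$, and in each of these every creation operator $a_3^{\dagger}$ is accompanied by exactly one annihilation operator $a_3$. Consequently every monomial appearing in $\mathcal{H}_1$ raises and lowers the occupation of mode $3$ the same number of times, so each monomial maps an $N_3$-eigenspace into itself and $[\mathcal{H}_1,N_3]=0$.

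With both $[\mathcal{H}_1,N]=0$ and $[\mathcal{H}_1,N_3]=0$ in hand, $\mathcal{H}_1$ preserves the joint eigenspaces labelled by the pair $(n,n_3)$. For fixed total degree $n$ and fixed $n_3=k$ with $0\le k\le n$, this eigenspace is spanned by the monomials $\{\zeta_1^{n_1}\zeta_2^{n_2}\zeta_3^{k}:n_1+n_2=n-k\}$ and therefore has dimension $n-k+1$. Hence, within the degree-$n$ space, $\mathcal{H}_1$ is block diagonal with one block for each value $k=0,1,\dots,n$, of respective sizes $n+1,n,\dots,1$; setting $j=n-k$ relabels these as blocks of order $(j+1)\times(j+1)$ for $j=0,\dots,n$, which is precisely $\mathcal{H}_1=\bigoplus_{j=0}^{n}B^{(n)}_{(j+1)\times(j+1)}$. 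As a consistency check, $\sum_{j=0}^{n}(j+1)=\frac{(n+1)(n+2)}{2}=\binom{n+2}{2}$ equals the dimension of the homogeneous degree-$n$ space, so the listed blocks exhaust the representation.

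I expect the only genuine obstacle to be the first step, namely recognising that mode $3$ enters $\mathcal{H}_1$ solely through $N_3$-conserving combinations; everything afterward is bookkeeping. It is worth remarking that $n_1$ and $n_2$ individually are \emph{not} conserved, since the term $B_{12}a_1^{\dagger}a_2$ and the cubic-in-$\{1,2\}$ couplings such as $V_{1112}$ mix them, so no finer splitting of the $(n,k)$ sectors is forced. This is exactly what makes each block genuinely of order $(n-k+1)$ rather than decomposing further, and it matches the claimed sizes.
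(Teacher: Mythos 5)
Your proof is correct and takes essentially the same route as the paper's: both arguments rest on the single observation that the occupation of the third mode is conserved by $\mathcal{H}_1$, which refines the total-degree grading into invariant subspaces $\{\zeta_1^{n_1}\zeta_2^{n_2}\zeta_3^{n_3}:n_1+n_2=n-n_3\}$ of dimension $n-n_3+1$. The only cosmetic difference is that the paper establishes the invariance of $n_3$ by writing out the explicit action of $\mathcal{H}_1$ on a basis monomial $\vert n_1,n_2,n_3\rangle$ (the five $A$-coefficients), whereas you read $[\mathcal{H}_1,N_3]=0$ directly off the vanishing pattern of the $B_{ij}$ and $V_{ijkl}$; the concluding dimension count is identical.
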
 

\begin{proof}
	It is obvious to note that for a given $n$ the dimension of the eigenvector space of $\mathcal{H}_1$ is $\frac{1}{2}(n+1)(n+2)$. Now the action of $\mathcal{H}_1$ on a typical basis element $\vert n_1, n_2, n_3\rangle$ can be given by
	
	\begin{eqnarray}\label{block}
	\mathcal{H}_1\vert n_1, n_2, n_3\rangle\nonumber\\
	=\omega_0^{-2}[A_{(0,0,0)}\vert n_1, n_2, n_3\rangle
	+A_{(2,-2,0)}\vert n_1+2, n_2-2, n_3\rangle\nonumber\\
	+A_{(-2,2,0)}\vert n_1-2, n_2+2, n_3\rangle\nonumber\\
	+A_{(1,-1,0)}\vert n_1+1, n_2-1, n_3\rangle
	+A_{(-1,1,0)}\vert n_1-1, n_2+1, n_3\rangle] 
	\end{eqnarray}
	where,	
	\begin{eqnarray}
	A_{(0,0,0)}=\left[\frac{1}{2}(n_1-n_2)+\omega_0(c-n_3)\right]^2-\frac{\gamma^2}{4}(n_1+n_2+2n_1n_2)\nonumber\\
	A_{(2,-2,0)}=A_{(-2,2,0)}=-\frac{\gamma^2}{4}\nonumber\\
	A_{(1,-1,0)}=\frac{i\gamma}{2}n_2(n_1-n_2+1+2\omega_0(c-n_3))\nonumber\\
	A_{(-1,1,0)}=\frac{i\gamma}{2}n_1(n_1-n_2-1+2\omega_0(c-n_3))
	\end{eqnarray}
Now, in view of equation-\ref{block} it is observed that the index $n_3$ is not modified on the right hand side under the action of	$\mathcal{H}_1$. Hence, for a fixed value of $n_3$ the invariant subspace spanned by $\{\zeta_1^{n_1}\zeta_2^{n_2} : n_1, n_2=0\dots n-n_3\}$ has the dimension $n-n_3+1$ representing a block of same dimension. As the largest value of $n_3=n$ the smallest of the block has the dimension $1$, As $n_3$ decreases from $n$ to $0$ with unit interval we get a sequence of blocks with increasing dimension from $1$ to $n+1$.
	\end{proof}

\begin{remark}
	The result is consistent with the fact that the dimension of the eigenvector space $\mathcal{E}^{(n)}$ for a given $n$ is $\sum_{j=0}^{n}(j+1)=\frac{1}{2}(n+1)(n+2)$.
\end{remark}

\subsubsection{Case : 1 $n=2$.}

 A $6\times 6$ representation of the Hamiltonian is possible which can be reduced to a block-diagonal form ($B^{(2)}_{1\times 1}\oplus B^{(2)}_{2\times 2}\oplus B^{(3)}_{3\times 3}$) in the basis $\{\vert 0, 0, 2\rangle; \vert 1, 0, 1\rangle, \vert 0, 1, 1\rangle; \vert 2, 0, 0\rangle, \vert 0, 2, 0\rangle,\vert 1, 1, 0\rangle \}$. Following table can be constructed regarding the eigenvalues and eigenstates (in $\mathcal{E}^{(2)}$) of the Hamiltonian.

\vspace{5mm}
\begin{tabular}{|c|c|} \hline\hline
	\multicolumn{1}{c}{Eigenvalue} &
	\multicolumn{1}{c}{Eigenstate}\\ \hline\hline
	$(c-2)^2$ & $\vert \mathbf{\psi_1}\rangle=\vert 0, 0, 2\rangle $\\ \hline\hline
	$(c-\frac{1}{2})^2$ & $\vert \mathbf{\psi_2}\rangle=\vert 0, 1, 1\rangle - i\left( \frac{1+ \omega_0}{1-\omega_0}\right)^{\frac{1}{2}}\vert 1, 0, 1\rangle $\\ \hline\hline
	$(c-\frac{3}{2})^2$ &  $\vert \mathbf{\psi_3}\rangle=\vert 0, 1, 1\rangle - i\left( \frac{1- \omega_0}{1+\omega_0}\right)^{\frac{1}{2}}\vert 1, 0, 1\rangle $\\ \hline\hline
	0 & $\vert \mathbf{\psi_4}\rangle=\vert 2, 0, 0\rangle -\vert 0, 2, 0\rangle +\frac{2i}{\gamma}\vert 1, 1, 0\rangle$\\ \hline\hline
	1 &  $\vert \mathbf{\psi_5}\rangle=\vert 2, 0, 0\rangle +\vert 0, 2, 0\rangle $\\ \hline\hline
	1 & $\vert \mathbf{\psi_6}\rangle = \vert 2, 0, 0\rangle -\vert 0, 2, 0\rangle +{2i}{\gamma}\vert 1, 1, 0\rangle$\\ \hline\hline
	\end{tabular}

\vspace{5mm}

\begin{remark}
	It is obvious that the eigenvalues are all real and the eigenvalue $"0"$ is doubly degenerate for $c=\frac{1}{2}, \frac{3}{2}, 2$. Eigenvalue $"1"$ which is otherwise doubly degenerate becomes triply degenerate for $c=1, \pm\frac{1}{2},  3, \frac{3}{2}, \frac{5}{2}, 3$. Eigenvalues $"\frac{9}{16}"$, $"\frac{1}{16}"$ and $"\frac{1}{4}"$ are doubly degenerate for $c=\frac{5}{4}$, $c=\frac{7}{4}$ and $c=1$ respectively. It is being observed that for $c>3$ and $c<-\frac{1}{2}$ all the eigenvalues (except $1$) are non-degenerate. The eigenvectors are nontrivially modified i. e.; there is no point of retrieving the eigenstates corresponding to non-deformed case by letting $\gamma\rightarrow 0$. 
\end{remark}
\subsubsection{Case-II : n=3.} A $10\times 10$ representation of the Hamiltonian is possible which is block-diagonal $(B^{(1)}_{4\times 4}\oplus B^{(1)}_{1\times 1}\oplus B^{(1)}_{3\times 3}\oplus B^{(1)}_{2\times 2})$ in the basis $\{\vert 3, 0, 0\rangle, \vert 0, 3, 0\rangle, \vert 2, 1, 0\rangle, \vert 1, 2, 0\rangle, \vert 0, 0, 3\rangle, \vert 2, 0, 1\rangle, \vert 0, 2, 1\rangle, \vert 1, 1, 1\rangle, \vert 1, 0, 2\rangle, \vert 0, 1, 2\rangle \}$. The following  table gives us the eigenvalues and eigenvectors (in $\mathcal{E}^{(3)}$) of the Hamiltonian. 

\vspace{5mm}
\begin{tabular}{|c|c|} \hline\hline
	\multicolumn{1}{c}{Eigenvalue} &
	\multicolumn{1}{c}{Eigenstate}\\ \hline\hline
	$(c-3)^2$ & $\vert \mathbf{\chi_1}\rangle=\vert 0, 0, 3\rangle $\\ \hline\hline
	$(c-\frac{3}{2})^2$ &  $\vert \mathbf{\chi_2}\rangle=\vert 1, 0, 2\rangle +i\left(\frac{1-\omega_0}{1+\omega_0}\right)^{\frac{1}{2}} \vert 0, 1, 2\rangle $\\ \hline\hline
	$(c-\frac{5}{2})^2$ &  $\vert \mathbf{\chi_{3}}\rangle=\vert 1, 0, 2\rangle +i\left(\frac{1+\omega_0}{1-\omega_0}\right)^{\frac{1}{2}} \vert 0, 1, 2\rangle $\\ \hline\hline
	$(c-1)^2$ &  $\vert \mathbf{\chi_4}\rangle=-i\frac{(1-\omega_0^2)^{\frac{1}{2}}}{2}\vert 2, 0, 1\rangle + i\frac{(1-\omega_0^2)^{\frac{1}{2}}}{2}\vert 0, 2, 1\rangle+\vert 1, 1, 1\rangle $\\ \hline\hline
	$c^2$ &  $\vert \mathbf{\chi_5}\rangle=-\frac{i}{2}\left(\frac{1+\omega_0}{1-\omega_0}\right)^{\frac{1}{2}}\vert 2, 0, 1\rangle + \frac{i}{2}\left(\frac{1-\omega_0}{1+\omega_0}\right)^{\frac{1}{2}}\vert 0, 2, 1\rangle+\vert 1, 1, 1\rangle $\\ \hline\hline
	$(c-2)^2$ & $\vert \mathbf{\chi_6}\rangle=-\frac{i}{2}\left(\frac{1-\omega_0}{1+\omega_0}\right)^{\frac{1}{2}}\vert 2, 0, 1\rangle + \frac{i}{2}\left(\frac{1+\omega_0}{1-\omega_0}\right)^{\frac{1}{2}}\vert 0, 2, 1\rangle+\vert 1, 1, 1\rangle $  \\ \hline\hline
	$\frac{1}{4}$ & $\vert \mathbf{\chi_7}\rangle=\frac{3-3\omega_0^2}{3+\omega_0^2}\vert 3, 0, 0\rangle +i\frac{6 (1-\omega_0^2)^{\frac{1}{2}}}{3+\omega_0^2}\vert 3, 0, 0\rangle +\vert 1, 2, 0\rangle$\\ \hline\hline
	$\frac{1}{4}$ &  $\vert \mathbf{\chi_8}\rangle=-i\frac{6 ({1-\omega_0^2})^{\frac{1}{2}}}{3+\omega_0^2}\vert 3, 0, 0\rangle +\frac{3-3\omega_0^2}{3+\omega_0^2}\vert 0, 3, 0\rangle+\vert 2, 1, 0\rangle $\\ \hline\hline
	$\frac{9}{4}$ & $\vert \mathbf{\chi_9}\rangle = \vert 3, 0, 0\rangle +2i(1-\omega_0^2)^{-\frac{1}{2}}\vert 0, 3, 0\rangle +\vert 1, 2, 0\rangle$\\ \hline\hline
	$\frac{9}{4}$ & $\vert \mathbf{\chi_{10}}\rangle=-2i(1-\omega_0^2)^{-\frac{1}{2}}\vert 3, 0, 0\succ+\vert 0, 3, 0\rangle+\vert 2, 1, 0\rangle $\\ \hline\hline

\end{tabular}

\vspace{5mm}

\begin{remark}
	It may be observed that there exist a three fold degeneracy of the eigenvalue $\frac{9}{4}$ for $c=-\frac{3}{2}, \pm\frac{1}{2}, 0,1, \frac{5}{2}, 3, \frac{7}{2}, 4, \frac{9}{2}$ and of the eigenvalue $\frac{1}{4}$ for $c=-\frac{1}{2}, 1, 3, \frac{7}{2}$. Similarly a four fold degeneracy of the eigenvalue $\frac{9}{4}$ is observed for $c=\frac{3}{2}$ and that of $\frac{1}{4}$ is observed for $c=\frac{1}{2}, \frac{3}{2}, 2, \frac{5}{2}$. Here again we observe that for $c>4\frac{1}{2}$ and $c<-\frac{3}{2}$ all eigen values are nondegenerate except $\frac{1}{4}$ and $\frac{9}{4}$.
\end{remark}

\section{\textbf{$\partial_{\mathcal { PT}}$-symmetry of} $\mathcal{H}_1$ \textbf{and its eigenstates}}
Considering boson operators $a_j=2^{-\frac{1}{2}}(x_j+ip_j)$ and $a^{\dagger}_j=2^{-\frac{1}{2}}(x_j-ip_j)$
let us define an operator ${\mathcal P_j}$ whose action has the effect:$x_j\rightarrow -x_j$, $p_j\rightarrow -p_j$ and an operator ${\mathcal T}$ whose action has the effect : $x_j\rightarrow x_j$, $p_j\rightarrow -p_j$ and $i\rightarrow -i$. It is crucial to note that ${\mathcal P_j}$ can only act on $j$-th coordinate and momentum and therefore indifferent to $\{x_k, p_k \forall k\neq j\}$ whereas the action of ${\mathcal T}$ is independent of $j$. ${\mathcal P_j}$ can be called the $j$-th
partial parity operator and ${\mathcal T}$ is  called the time-reversal operator\cite{bender02, beygi15}. This means  
$\{{\mathcal P_j} , a_j \}_+ = 0 = \{{\mathcal P_j} , a^\dagger_j \}_+$
and $[{\mathcal P_j} , a_k ] = 0 = [{\mathcal P_j} , a^\dagger_k ]$ for
$k \neq j$.The time reversal operator
${\mathcal T}$ follows the rule $[{\mathcal T} , a_k ] = 0 =
[{\mathcal T} , a^\dagger_k ]$. 
\begin{dfn}\label{partial}
	Let us consider a multiboson operator $\Upsilon\doteq \Upsilon(a_j ;  a_j^{\dagger}: j=1\dots n, i)$. The action of $j$-th $\partial_{\mathcal { PT}}$ operator $\Pi^{(j)}_T=\mathcal{P}_j\mathcal{T}$ on $\Upsilon$ is a composite action given by the following transformation : 
	\begin{eqnarray}
	\Pi^{(j)}_T : \Upsilon(a_1, \dots, a_j,\dots ,a_n;  a_1^{\dagger}, \dots,a_j^{\dagger},\dots, a_n^{\dagger}; i)\nonumber\\
	\rightarrow\Upsilon(a_1, \dots, -a_j,\dots ,a_n;  a_1^{\dagger}, \dots,-a_j^{\dagger},\dots, a_n^{\dagger}; -i).
	\end{eqnarray}
\end{dfn}
\begin{remark}
	The definition can be extended by introducing the operator $\Pi^{(j_1j_2\dots j_l)}_T=\mathcal{P}_{j_1}\dots \mathcal{P}_{j_l}\mathcal{T}$ which implies simutaneous changes performed on corresponding boson operators. It is to be mentioned that if any operator $\Upsilon$ remains unchanged under such action it is claimed to have partial $\partial_{\mathcal { PT}}$ -symmetry  in the corresponding boson-operators. In short this may be written as $[\Pi^{(j_1j_2\dots j_l)}_T, \Upsilon]=0$ A global ${\mathcal {PT}}$ -symmetry operator $\Pi_T$ can therefore be understood by the operation : $\Pi_T\doteq \Pi^{(1, \dots, n)}_T : \Upsilon(a_1, \dots, a_j,\dots ,a_n;  a_1^{\dagger}, \dots,a_j^{\dagger},\dots, a_n^{\dagger}; i)
	\rightarrow\Upsilon(-a_1, \dots, -a_j,\dots ,-a_n;  -a_1^{\dagger}, \dots,-a_j^{\dagger},\dots, -a_n^{\dagger}; -i)$. $\Upsilon$ can be claimed to have $l$-th order $\partial_{\mathcal { PT}}$ symmetry in the corresponding multi-index $l$ if $[\Pi^{(j_1, \dots, j_l)}_T, \Upsilon]=0$.
\end{remark}

\subsection{\textbf{$\partial_{\mathcal { PT}}$-symmetry} in Fock Space}
	Considering the Bargmann-Fock correspondence : $a_j^{\dagger} = \zeta_j$ and $ a_j = \partial_{\zeta_j}$, $\{\zeta_j : j=1, 2, 3\}$ being complex variables, the operator $\mathcal{H}_1$
	 takes the form
	\begin{eqnarray}
	\mathcal{H}_1=\omega_0^{-2}\left(\sum_{i, j=1}^3B_{ij}\zeta_i\partial_{\zeta_j}+\sum_{i, j, k, l=1}^3V_{ijkl}\zeta_i\partial_{\zeta_j}\zeta_k\partial_{\zeta_l}\right)+\rm{const.}
	\end{eqnarray}
	
	The eigenfunctions of the operator $\mathcal{H}_1$ can be sought in Fock (or Segal-Bargmann) space $(\mathcal{F}^2(\mathbb{C}^3))$ which is a separable complex Hilbert space of entire functions (of the complex variables $\zeta_1$, $\zeta_2$ and $\zeta_3$) equipped with an inner-product 
	\begin{eqnarray}
	\langle \psi, \phi\rangle=\int_{W(\zeta_1)}\int_{W(\zeta_2)}\int_{W(\zeta_3)} \psi(\zeta_1,\zeta_2, \zeta_3)\overline{\phi(\zeta_1, \zeta_2, \zeta_3)}\nonumber\\
	\;\; {\rm{with}} \int_{W(\zeta_1)}\int_{W(\zeta_2)}\int_{W(\zeta_3)}\equiv \int\int\int d{W(\zeta_1)}d{W(\zeta_2)}d{W(\zeta_3)}.
	\end{eqnarray} 
	Here, ${d}{W(u)}=\frac{1}{\pi}e^{-\vert u\vert^2}{d}({\rm{Re}}(u)) d({\rm{Im}}(u))$ represents the relevant Gaussian measure relative to the complex variable $u$.
\begin{dfn}
	In $(\mathcal{F}^2(\mathbb{C}^1))$  the \textbf{weighted composition conjugation} $\mathcal{C}_{(\vartheta, \eta, \upsilon)}\psi(\zeta)$  \cite{hai18,hai16,hai18c} is defined by the following action
	\begin{eqnarray}
	\mathcal{C}_{(\vartheta, \eta, \upsilon)}\psi(\zeta)=\upsilon e^{\eta\zeta}\overline{\psi(\overline{\vartheta\zeta+\eta})}.
	\end{eqnarray}
\end{dfn}

	Here, $\zeta$ is a complex variable and $\{\vartheta, \eta, \upsilon\}$ are complex numbers satisfying the set of necessary and sufficient conditions : $\vert\vartheta\vert=1,\bar{\vartheta}\eta+\bar{\eta}=0$ and $\vert\upsilon\vert^2e^{\vert\eta\vert^2}=1$. 
	
	\begin{remark}
		The anti-linear operator $\mathcal{C}_{(\vartheta, \eta, \upsilon)}$ is a \textbf{conjugation} since it is involutive and isometric \cite{hai16}.
	\end{remark}
	 The action of the operator $\mathcal{PT}$ is equivalent to the choice : $\vartheta=-1=-\upsilon,\eta=0$ which results to the following equation
	\begin{eqnarray}
	\mathcal{C}_{(\vartheta, 0, 1)}\vert_{\vartheta=-1}\psi(\zeta)=\overline{\psi(\overline{-\zeta})}.
	\end{eqnarray}
	Similarly, the action of $\mathcal{T}$ is indicative of the choice : $\vartheta=1, \eta=0, \upsilon=1$ giving
	\begin{eqnarray}
	\mathcal{C}_{(\vartheta, 0, 1)}\vert_{\vartheta=1}\psi(\zeta)=\overline{\psi(\overline{\zeta})}.
	\end{eqnarray}
	If $\psi$ is a function of several complex variables $\{\zeta_j : j=1\dots n\}$ one can define an operator $\mathcal{C}_{(\vartheta_j,\eta_j, \upsilon_j : j=1\dots n )}$ with the action
	\begin{eqnarray}
	\mathcal{C}_{(\vartheta_j,\eta_j= 0,\upsilon_j= 1 : j=1\dots n)}\psi(\zeta_1,\dots,\zeta_j,\dots,\zeta_n)=\overline{\psi(\overline{\vartheta_1\zeta_1},\dots,\overline{\vartheta_j\zeta_j},\dots, \overline{\vartheta_n\zeta_n})}
	\end{eqnarray} 
	Let us introduce an operator $\mathcal{C}^{(j)}_n=\mathcal{C}_{(\vartheta_j,\eta_j= 0,\upsilon_j= 1 ; j=1\dots n)}\vert_{\vartheta_1=1,\dots,\vartheta_j=-1,\dots,\vartheta_n=1}$ as $j$-th partial $\mathcal{PT}$ symmetry operator through the following action
	\begin{eqnarray}
	\mathcal{C}^{(j)}_n\psi(\zeta_1,\dots,\zeta_j,\dots,\zeta_n)=\overline{\psi(\bar{\zeta_1},\dots,\overline{-\zeta_j},\dots,\bar{\zeta_n})}.
	\end{eqnarray}
Depending upon each index there can be $n$ such operators for a typical $n$-variable situation. Let us call this set $\{\mathcal{C}^{(j)}_n : j=1\cdots n\}$ as the set of 1st order partial $\mathcal{PT}$ symmetry operators. Similarly, $\{\mathcal{C}^{(jk)}_n : j,k=1\cdots n\; j\neq k\}$ represents the set of $\frac{1}{2}n(n-1)$ number of 2nd order symmetry operators. For an $n$-variable system there would be $(^n_m)$ number of $m$-th order ($m\leq n$) symmetry operators for obvious reason.
 
	The global $\mathcal{PT}$ symmetry operator $\mathcal{C}_n^{(1\cdots n)}$ (which will be written as $\mathcal{C}_n$) can therefore be defined through the action
	\begin{eqnarray}
	\mathcal{C}_n\psi(\zeta_1,\dots,\zeta_j,\dots,\zeta_n)=\overline{\psi(\overline{-\zeta_1},\dots,\overline{-\zeta_j},\dots,\overline{-\zeta_n})}.
	\end{eqnarray}
	For our present purpose we shall only consider the operators $\mathcal{C}_3$ , $\{\mathcal{C}_3^{(j)} : j=1, 2, 3\}$ and $\{\mathcal{C}_3^{(jk)} : j, k=1, 2, 3 ; j\neq k\}$. Now, global and partial  $\mathcal{PT}$ symmetries of any
	function $\psi(\zeta_1, \zeta_2, \zeta_3)$ are understood through the following equations
	\begin{eqnarray}
	\mathcal{C}_3\psi(\zeta_1, \zeta_2, \zeta_3)=\psi(\zeta_1, \zeta_2, \zeta_3)\:\:{\rm and}\:\:\mathcal{C}_3^{(j)}\psi(\zeta_1, \zeta_2, \zeta_3)=\psi(\zeta_1, \zeta_2, \zeta_3)\nonumber\\
	\:\:\forall\:\: j=1, 2, 3\nonumber\\
	\mathcal{C}_3^{(jk)}\psi(\zeta_1, \zeta_2, \zeta_3)=\psi(\zeta_1, \zeta_2, \zeta_3)\nonumber\\
	\:\:\forall\:\: j, k=1, 2, 3 ; j\neq k\nonumber\\
	\end{eqnarray}
	respectively.
	The global and partial  $\mathcal{PT}$ symmetries of any operator $\mathcal{Z}$ are equivalent to the following equalities
	\begin{eqnarray}\label{conju1}
	\mathcal{C}_3\mathcal{Z}\mathcal{C}_3=\mathcal{Z}\:\:{\rm and}\:\:\mathcal{C}^{(j)}_3\mathcal{Z}\mathcal{C}^{(j)}_3=\mathcal{Z}\:\:\forall\:\:j=1, 2, 3\nonumber\\
	\mathcal{C}^{(jk)}_3\mathcal{Z}\mathcal{C}^{(jk)}_3=\mathcal{Z}\:\:\forall\:\:j,k=1, 2, 3 \:\:{\rm with}\:\: j\neq k
	\end{eqnarray}
	respectively. 
	With this definition of conjugation we shall verify a number of properties of the Hamiltonian $\mathcal{H}_1$ through the notion of \textbf{Reproducing Kernel Hilbert Space}.
	
	\begin{dfn}
		A function of the form $K^{[m_1, m_2, m_3]}_{\zeta_1, \zeta_2, \zeta_3}(u_1, u_2, u_3)=\prod_{j=1}^3 u_j^{m_j}e^{u_j\overline{\zeta_j}}$($m_1, m_2, m_3\in\mathbf{N}$ and $\zeta_j,u_j\in \mathbb{C}\:\:\forall\:\:j=1, 2, 3$) is called a kernel function (or a \textbf{reproducing kernel}) which satisfies the condition 
		\begin{eqnarray}
		\psi^{(m_1,m_2,m_3)}(\zeta_1, \zeta_2, \zeta_3)=\left\langle\psi, K^{[m_1, m_2, m_3]}_{\zeta_1, \zeta_2, \zeta_3}\right\rangle\nonumber\\
		=\int_{W(u_1)}\int_{W(u_2)}\int_{W(u_3)}\psi(u_1, u_2, u_3)\overline{K^{[m_1, m_2, m_3]}_{\zeta_1, \zeta_2, \zeta_3}} 
		\end{eqnarray}
	\end{dfn}
	
Here, $\psi\in\mathcal{F}^2(\mathbb{C}^3)$, $\psi^{(m_1, m_2, m_3)}(\zeta_1, \zeta_2, \zeta_3)=\partial_{\zeta_1}^{m_1}\partial_{\zeta_2}^{m_2}\partial_{\zeta_3}^{m_3}\psi$ and $\psi^{(0, 0, 0)}(\zeta_1, \zeta_2, \zeta_3)\equiv\psi(\zeta_1, \zeta_2, \zeta_3)$. Such a kernel function renders the present Fock space ($\mathcal{F}^2(\mathbb{C}^3)$) meaningful in the sense of so called \textbf{Reproducing Kernel Hilbert Space}. The existence of such a kernel function facilitates our way of demonstrating various properties of functions and operators relative to the Hilbert space under consideration.

\begin{prop}
	$\mathcal{H}_1^{\star}\neq \mathcal{H}_1$. Where, $\mathcal{H}_1^{\star}$ the adjoint of the operator $\mathcal{H}_1$ and the adjoint is defined as
	$\left\langle\mathcal{H}_{1}\psi,K^{[m_1, m_2, m_3]}_{\zeta_1, \zeta_2, \zeta_3}\right\rangle=\left\langle\psi,\mathcal{H}_{1}^{\star}K^{[m_1, m_2, m_3]}_{\zeta_1, \zeta_2, \zeta_3}\right\rangle$.
\end{prop}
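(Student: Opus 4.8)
The plan is to compute $\mathcal{H}_1^{\star}$ explicitly as a differential operator on $\mathcal{F}^2(\mathbb{C}^3)$ and then exhibit a single coefficient that makes the comparison with $\mathcal{H}_1$ fail. The one ingredient I need first is the elementary adjoint rule in the Bargmann-Fock setting: multiplication by $\zeta_j$ and the derivative $\partial_{\zeta_j}$ are mutually adjoint, so that $\langle \zeta_j\psi,\phi\rangle = \langle\psi,\partial_{\zeta_j}\phi\rangle$ and $\langle\partial_{\zeta_j}\psi,\phi\rangle=\langle\psi,\zeta_j\phi\rangle$ for all $\psi,\phi\in\mathcal{F}^2(\mathbb{C}^3)$. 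I would obtain this by integrating by parts against the Gaussian measure $dW(\zeta_j)=\frac{1}{\pi}e^{-|\zeta_j|^2}d(\mathrm{Re}\,\zeta_j)\,d(\mathrm{Im}\,\zeta_j)$, the boundary terms dropping because entire functions grow slower than the Gaussian decays; equivalently it follows from the reproducing property by testing on the monomial basis that diagonalizes the measure. In operator terms this is exactly $a_j^{\star}=a_j^{\dagger}$, consistent with the identification $a_j=\partial_{\zeta_j}$, $a_j^{\dagger}=\zeta_j$.

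Next I would apply the anti-linear adjoint to $\mathcal{H}_1$ term by term. Since $(AB)^{\star}=B^{\star}A^{\star}$ and scalar coefficients conjugate under the sesquilinear inner product, each quadratic monomial transforms as $B_{ij}\,\zeta_i\partial_{\zeta_j}\mapsto \overline{B_{ij}}\,\zeta_j\partial_{\zeta_i}$ and each quartic monomial as $V_{ijkl}\,\zeta_i\partial_{\zeta_j}\zeta_k\partial_{\zeta_l}\mapsto \overline{V_{ijkl}}\,\zeta_l\partial_{\zeta_k}\zeta_j\partial_{\zeta_i}$. After relabelling the summation indices, the quadratic part of $\mathcal{H}_1^{\star}$ carries the coefficient matrix $\overline{B_{ji}}=(B^{\dagger})_{ij}$, so that $\mathcal{H}_1^{\star}=\mathcal{H}_1$ would force $B=B^{\dagger}$ together with the analogous Hermiticity condition on the rank-four array $V$. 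Note that the diagonal number-type terms are automatically preserved, since $(\zeta_j\partial_{\zeta_j})^{\star}=\zeta_j\partial_{\zeta_j}$, so only the off-diagonal and the imaginary coefficients can spoil self-adjointness.

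The decisive step is to locate a coefficient that violates these conditions. The off-diagonal entries $B_{12}=B_{21}=ic\gamma\omega_0$ are purely imaginary, so $(B^{\dagger})_{12}=\overline{B_{21}}=-ic\gamma\omega_0\neq B_{12}$, and the same sign flip occurs in the imaginary quartic coefficients such as $V_{1112}=\frac{i\gamma}{4}$ and $V_{1233}=-i\gamma\omega_0$. Hence for every $\gamma\neq 0$ (with $c\neq 0$ for the quadratic witness) the operator $\mathcal{H}_1^{\star}$ differs from $\mathcal{H}_1$ in at least one coefficient, which proves the claim. If one prefers to stay literally with the kernel definition of the adjoint, the same conclusion is certified by evaluating both sides of $\langle\mathcal{H}_1\psi,K^{[m_1,m_2,m_3]}_{\zeta_1,\zeta_2,\zeta_3}\rangle=\langle\psi,\mathcal{H}_1^{\star}K^{[m_1,m_2,m_3]}_{\zeta_1,\zeta_2,\zeta_3}\rangle$ on a low-degree test state such as $\psi=\zeta_1\zeta_3$ with a suitably chosen multi-index $(m_1,m_2,m_3)$, where the transferred derivative produces the conjugated coefficient.

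The main obstacle I anticipate is bookkeeping rather than conceptual: the quartic terms require careful reversal of four non-commuting factors together with conjugation of the partly imaginary coefficients $V_{ijkl}$, and one must check that the imaginary contributions do not cancel among symmetric index pairs before concluding non-Hermiticity. Isolating the single surviving off-diagonal term governed by $B_{12}$ sidesteps most of this arithmetic and furnishes the cleanest witness, while simultaneously making transparent that it is precisely the deformation parameter $\gamma$ (entering through the imaginary couplings) that is responsible for the loss of self-adjointness.
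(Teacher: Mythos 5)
Your proof is correct and reaches the paper's conclusion by essentially the same strategy: take the adjoint of $\mathcal{H}_1$ term by term and exhibit the sign flip on the imaginary, $\gamma$-dependent off-diagonal coefficients. The difference lies in how the elementary step is organized. The paper stays entirely inside the reproducing-kernel formalism it has set up: it proves $(u_1\partial_{u_1})^{\star}=u_1\partial_{u_1}$ by a fairly long chain of kernel manipulations (inserting the reproducing identity and exchanging the order of integration), and then merely asserts, without a displayed computation, that the adjoint of $iu_j\partial_{u_k}$ for $j\neq k$ acquires a minus sign, from which non-self-adjointness follows. You instead establish the single clean lemma $\zeta_j^{\star}=\partial_{\zeta_j}$ and $\partial_{\zeta_j}^{\star}=\zeta_j$ by Gaussian integration by parts and then obtain every required adjoint algebraically via $(AB)^{\star}=B^{\star}A^{\star}$ and $(\lambda A)^{\star}=\bar{\lambda}A^{\star}$; this is more systematic, identifies the general criterion $B=B^{\dagger}$, and makes the witness $\overline{B_{21}}=-ic\gamma\omega_0\neq B_{12}$ transparent. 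One point worth making fully explicit if you write this up: since $\zeta_i\partial_{\zeta_j}\zeta_k\partial_{\zeta_l}$ normal-orders into a quartic piece plus a quadratic correction $\delta_{jk}\zeta_i\partial_{\zeta_l}$, the honest comparison is between normal-ordered coefficients; for the pair $(1,2)$ the corrections $V_{1112}+V_{1222}=\frac{i\gamma}{4}-\frac{i\gamma}{4}=0$ cancel, so your $B_{12}$ witness survives intact provided $c\gamma\neq 0$. You flag this potential cancellation yourself, which is more care than the paper takes.
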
	
\begin{proof}
	We shall first show that 
	
	$\left\langle u_1\partial_{u_1}\psi(u_1,u_2,u_3),K^{[m_1, m_2, m_3]}_{\zeta_1, \zeta_2, \zeta_3}\right\rangle=\left\langle \psi(u_1,u_2,u_3),u_1\partial_{u_1}K^{[m_1, m_2, m_3]}_{\zeta_1, \zeta_2, \zeta_3}\right\rangle$.
	
	\begin{eqnarray}
	\left\langle u_1\partial_{u_1}\psi(u_1,u_2,u_3),K^{[m_1, m_2, m_3]}_{\zeta_1, \zeta_2, \zeta_3}\right\rangle\nonumber\\
	=\left\langle u_1\partial_{u_1}\psi{(u_1, u_2,u_3)},\prod_{j=1}^3u_j^{m_j}e^{u_j\overline{\zeta_j}}\right\rangle\nonumber\\
	=\int_{W(u_1)}\int_{W(u_2)}\int_{W(u_3)}u_1\partial_{u_1}\psi(u_1, u_2, u_3)\prod_{j=1}^3\overline{u_j}^{m_j}e^{\overline{u_j}{\zeta_j}}\nonumber\\
	=\int_{W(u_1)}\int_{W(u_2)}\int_{W(u_3)}u_1\left[\int_{W(v_1)}\int_{W(v_2)}\int_{W(v_3)}\overline{v_1}\psi(v_1, v_2, v_3)\prod_{j=1}^3e^{\overline{v_j}{u_j}}\right]\prod_{j=1}^3\overline{u_j}^{m_j}e^{\overline{u_j}{\zeta_j}}\nonumber\\
	=\int_{W(v_1)}\int_{W(v_2)}\int_{W(v_3)}\overline{v_1}\psi(v_1, v_2,v_3)\left[\int_{W(u_1)}\int_{W(u_2)}\int_{W(u_3)}u_1\prod_{j=1}^3\overline{u_j}^{m_j}e^{\overline{u_j}{\zeta_j}}e^{\overline{v_j}{u_j}}\right]\nonumber\\
	=\int_{W(v_1)}\int_{W(v_2)}\int_{W(v_3)}\overline{v_1}\psi(v_1, v_2, v_3)\left\langle u_1\prod_{j=1}^3e^{\overline{v_j}{u_j}}, \prod_{j=1}^3{u_j}^{m_j}e^{{u_j}\overline{\zeta_j}}\right\rangle\nonumber\\
	=\int_{W(v_1)}\int_{W(v_2)}\int_{W(v_3)}\overline{v_1}\psi(v_1, v_2, v_3)\partial_{\zeta_1}^{m_1}\partial_{\zeta_2}^{m_2}\partial_{\zeta_3}^{m_3}\left(\zeta_1\prod_{j=1}^3e^{\overline{v_j}{\zeta_j}}\right)\nonumber\\
	\int_{W(v_1)}\int_{W(v_2)}\int_{W(v_3)}\overline{v_1}\psi(v_1, v_2,v_3)\partial_{\zeta_1}^{m_1}\partial_{\zeta_2}^{m_2}\partial_{\zeta_3}^{m_3}\partial_{\overline{v_1}}\prod_{j=1}^3(e^{\overline{v_j}{\zeta_j}})\nonumber\\
	=\int_{W(v_1)}\int_{W(v_2)}\int_{W(v_3)}\psi(v_1, v_2,v_3)\overline{v_1}\partial_{\overline{v_1}}\partial_{\zeta_1}^{m_1}\partial_{\zeta_2}^{m_2}\partial_{\zeta_3}^{m_3}\prod_{j=1}^3(e^{\overline{v_j}{\zeta_j}})\nonumber\\
	=\left\langle\psi(v_1, v_2, v_3), v_1\partial_{v_1}K^{[m_1, m_2, m_3]}_{\zeta_1, \zeta_2, \zeta_3}(v_1, v_2, v_3)\right\rangle.
	\end{eqnarray}
\end{proof}
An identical argument holds for $u_j\partial_{u_j}$ for $j=2, 3$ whereas, $iu_j\partial_{u_k}=-iu_j\partial_{u_k}$ for $j\neq k$. Using this fact in the expression of the hamiltonian the proposition can be verified. 

\begin{prop}
	$\mathcal{H}_1$ is $\mathcal{C}_3$-self-adjoint i. e.; $\mathcal{C}_3\mathcal{H}_{1}^{\star}\mathcal{C}_3=\mathcal{H}_{1}$. 
\end{prop}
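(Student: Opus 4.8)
The plan is to read the asserted identity as the single statement that $\mathcal{H}_1$ is a fixed point of the map $\Theta(\mathcal{X}):=\mathcal{C}_3\,\mathcal{X}^{\star}\,\mathcal{C}_3$, and then exploit the fact that $\Theta$ is far better behaved than either $\mathcal{C}_3$ or $\star$ taken separately. First I would record the algebraic properties of $\Theta$. Since $\mathcal{C}_3$ is the isometric antilinear involution (the \textbf{conjugation} noted above) and $\star$ is the Fock adjoint, the composite $\Theta$ is \emph{complex-linear}, \emph{involutive}, and \emph{anti-multiplicative}, i.e.\ $\Theta(\mathcal{X}\mathcal{Y})=\Theta(\mathcal{Y})\Theta(\mathcal{X})$ and $\Theta(c\mathcal{X})=c\,\Theta(\mathcal{X})$ for $c\in\mathbb{C}$; in particular a factor such as $i\gamma$ is preserved. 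Its action on the generators follows by combining the elementary adjoint relations behind Proposition~1, namely $\zeta_j^{\star}=\partial_{\zeta_j}$ and $\partial_{\zeta_j}^{\star}=\zeta_j$, with the explicit form of $\mathcal{C}_3$, which gives $\mathcal{C}_3\zeta_j\mathcal{C}_3=-\zeta_j$ and $\mathcal{C}_3\partial_{\zeta_j}\mathcal{C}_3=-\partial_{\zeta_j}$. Together these yield the clean rule
\begin{equation*}
\Theta(\zeta_j)=-\partial_{\zeta_j},\qquad \Theta(\partial_{\zeta_j})=-\zeta_j,
\end{equation*}
equivalently $\Theta(a_j)=-a_j^{\dagger}$ and $\Theta(a_j^{\dagger})=-a_j$ in the boson picture.

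With these rules I would bypass the explicit $B_{ij},V_{ijkl}$ representation entirely and instead use the structural form $\mathcal{H}_1=\tfrac18+\omega_0^{-4}\mathcal{C}+\omega_0^{-2}(L_0^{\gamma})^2$ (with $\beta=1$, $p=1$). The problem then reduces to checking that each constituent is $\Theta$-fixed. The operator $M_0^{\gamma}=\omega_0(c-a_3^{\dagger}a_3)$ is manifestly fixed, being Hermitian and even in $a_3,a_3^{\dagger}$. A one-line computation on $J_0^{\gamma}=\tfrac12(a_1^{\dagger}a_1-a_2^{\dagger}a_2)+\tfrac{i\gamma}{2}(a_1^{\dagger}a_2+a_2^{\dagger}a_1)$ via anti-multiplicativity gives $\Theta(J_0^{\gamma})=J_0^{\gamma}$, since the two minus signs cancel in each bilinear and $i\gamma$ is retained; hence $L_0^{\gamma}=\tfrac12(J_0^{\gamma}+M_0^{\gamma})$ and $(L_0^{\gamma})^2$ are fixed. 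For the Casimir I would evaluate $\Theta$ on the ladder operators: writing $J_{\pm}^{\gamma}=\omega_0^{-1}J_1^{\gamma}\pm iJ_2^{\gamma}$ and using $\Theta(J_1^{\gamma})=J_1^{\gamma}$, $\Theta(J_2^{\gamma})=-J_2^{\gamma}$, one obtains $\Theta(J_{\pm}^{\gamma})=J_{\mp}^{\gamma}$, so that $\Theta(J_-^{\gamma}J_+^{\gamma})=\Theta(J_+^{\gamma})\Theta(J_-^{\gamma})=J_-^{\gamma}J_+^{\gamma}$ and therefore $\Theta(\mathcal{C})=\mathcal{C}$ for $\mathcal{C}=\omega_0^{-2}J_0^{\gamma}(J_0^{\gamma}+\omega_0)+J_-^{\gamma}J_+^{\gamma}$.

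To assemble the pieces I would use that $\mathcal{C}$, being a central element, commutes with $L_0^{\gamma}$. For commuting $\Theta$-fixed operators $A,B$ the anti-multiplicativity gives $\Theta(AB)=\Theta(B)\Theta(A)=BA=AB$, and combined with complex-linearity this shows that any polynomial in $\mathcal{C}$ and $L_0^{\gamma}$ with constant coefficients is again $\Theta$-fixed. Applying this to $\mathcal{H}_1=\tfrac18+\omega_0^{-4}\mathcal{C}+\omega_0^{-2}(L_0^{\gamma})^2$ delivers $\Theta(\mathcal{H}_1)=\mathcal{H}_1$, which is precisely $\mathcal{C}_3\mathcal{H}_1^{\star}\mathcal{C}_3=\mathcal{H}_1$.

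The main obstacle is the Casimir step: it is the only place where the operator ordering and the $\gamma$-deformation genuinely interact, and it hinges on the slightly delicate identity $\Theta(J_{\pm}^{\gamma})=J_{\mp}^{\gamma}$. One could instead attempt a direct verification on the explicit differential-operator form, splitting $\mathcal{H}_1$ into its real- and imaginary-coefficient parts and showing the former is $\star$-self-adjoint and $\mathcal{C}_3$-even while the latter is $\star$-anti-self-adjoint and $\mathcal{C}_3$-odd. I expect that route to be considerably more painful, because the quartic monomials $\zeta_i\partial_{\zeta_j}\zeta_k\partial_{\zeta_l}$ are not linearly independent as operators (owing to $[\partial_{\zeta_j},\zeta_k]=\delta_{jk}$), so matching the index-reversed tensor $V_{lkji}$ against $V_{ijkl}$ would require normal-ordering bookkeeping that the algebraic route avoids completely.
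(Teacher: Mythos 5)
Your proposal is correct, but it proves the statement by a genuinely different route than the paper. The paper works inside the reproducing-kernel framework it has just set up: it explicitly computes $\mathcal{C}_3(u_1\partial_{u_1})^{\star}\mathcal{C}_3$ acting on the kernel $K^{[m_1,m_2,m_3]}_{\zeta_1,\zeta_2,\zeta_3}$, checks that the result equals $u_1\partial_{u_1}K^{[m_1,m_2,m_3]}_{\zeta_1,\zeta_2,\zeta_3}$, asserts the analogous identities for the remaining bilinears $u_j\partial_{u_j}$ and $iu_j\partial_{u_k}$, and then appeals to the explicit $B_{ij}$, $V_{ijkl}$ expansion of $\mathcal{H}_1$ — i.e.\ exactly the coefficient-matching route you flag at the end as the ``more painful'' alternative (and whose delicacy for the quartic terms, where index reversal must be reconciled with the non-independence of the monomials $\zeta_i\partial_{\zeta_j}\zeta_k\partial_{\zeta_l}$, the paper glosses over). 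Your argument instead packages the two antilinear operations into the complex-linear, anti-multiplicative map $\Theta(\mathcal{X})=\mathcal{C}_3\mathcal{X}^{\star}\mathcal{C}_3$, computes $\Theta(a_j)=-a_j^{\dagger}$, $\Theta(a_j^{\dagger})=-a_j$ once, and then exploits the structural form $\mathcal{H}_1=\tfrac18+\omega_0^{-4}\mathcal{C}+\omega_0^{-2}(L_0^{\gamma})^2$; the key identities $\Theta(J_1^{\gamma})=J_1^{\gamma}$, $\Theta(J_2^{\gamma})=-J_2^{\gamma}$, hence $\Theta(J_{\pm}^{\gamma})=J_{\mp}^{\gamma}$ and $\Theta(J_-^{\gamma}J_+^{\gamma})=J_-^{\gamma}J_+^{\gamma}$, all check out against the paper's definitions (with $p=1$), as does $\Theta(J_0^{\gamma})=J_0^{\gamma}$ for $J_0^{\gamma}=\tfrac12(a_1^{\dagger}a_1-a_2^{\dagger}a_2)+\tfrac{i\gamma}{2}(a_1^{\dagger}a_2+a_2^{\dagger}a_1)$. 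What your route buys is conceptual transparency (the $\mathcal{C}_3$-self-adjointness is inherited from the $\Theta$-invariance of the algebra generators, with no coefficient bookkeeping) and it sidesteps the normal-ordering issue entirely; what the paper's route buys is that the same kernel computations are reused for Propositions~1 and~3, keeping everything inside one uniform RKHS formalism. The only cosmetic caveat is that the sign ambiguity in the paper's Casimir $\mathcal{C}_J^{\gamma}=\omega_0^{-2}J_0^{\gamma}(J_0^{\gamma}\pm\omega_0)+J_{\mp}^{\gamma}J_{\pm}^{\gamma}$ is immaterial here, since both choices are $\Theta$-fixed by your argument — worth one sentence so the reader does not think a particular branch was silently selected.
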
	
\begin{proof}
	We shall verify the fact
	
	 $\mathcal{C}_3(u_1\partial_{u_1})^{\star}\mathcal{C}_3K^{[m_1, m_2, m_3]}_{\zeta_1, \zeta_2, \zeta_3}(u_1, u_2, u_3)=(u_1\partial_{u_1})K^{[m_1, m_2, m_3]}_{\zeta_1, \zeta_2, \zeta_3}(u_1, u_2, u_3)$.
	 
	 Now, 
	 \begin{eqnarray}
	 \mathcal{C}_3(u_1\partial_{u_1})^{\star}\mathcal{C}_3K^{[m_1, m_2, m_3]}_{\zeta_1, \zeta_2, \zeta_3}(u_1, u_2, u_3)\nonumber\\
	 =\mathcal{C}_{3}(u_1\partial_{u_1})^{\star}\mathcal{C}_{3}\prod_{j=1}^3u_j^{m_j}e^{u_j\overline{\zeta_j}}\nonumber\\
	 =\mathcal{C}_{3}u_1\partial_{u_1}\prod_{j=1}^3\overline{\overline{(-u_j)}^{m_1}}}\overline{e^{\overline{-u_j}\overline{\zeta_j}}\nonumber\\
	 =\mathcal{C}_{3}u_1\partial_{u_1}(-1)^{m_1+m_2+m_3}\prod_{j=1}^3u_j^{m_j}e^{-u_j{\zeta_j}}\nonumber\\
	 =\mathcal{C}_{3}u_1(-1)^{m_1+m_2+m_3}[(-{\zeta_1})+m_1u_1^{-1}]\prod_{j=1}^3u_j^{m_j}e^{-u_j{\zeta_j}}\nonumber\\
	 =-u_1(-1)^{m_1+m_2+m_3}(-1)^{m_1+m_2+m_3}[(-\overline{\zeta_1})\nonumber\\
	 +m_1(-u_1)^{-1}]\prod_{j=1}^3e^{u_j\overline{\zeta_j}}\nonumber\\
	 =u_1\partial_{u_1}K^{[m_1, m_2, m_3]}_{\zeta_1, \zeta_2, \zeta_3}(u_1, u_2, u_3).
	 \end{eqnarray}
\end{proof}	
Similar result can be verified 	$u_j\partial_{u_j}$ for $j=2, 3$ and $iu_j\partial_{u_k}=iu_j\partial_{u_k}$ for $j\neq k$. Using all these results in the expression of $\mathcal{H}_1$, $\mathcal{C}_3$-selfadjointness can be verified.

The following prposition verifies the existence of various order of partial $\mathcal{PT}$-symmetry of the Hamiltonian.
\begin{prop}
	$\mathcal{C}_3^{(j)}\mathcal{H}_1\mathcal{C}_3^{(j)}=\mathcal{H}_1$, for j=1,2,3  $\mathcal{C}_3^{(jk)}\mathcal{H}_1\mathcal{C}_3^{(jk)}=\mathcal{H}_1$ for $j=1, 2$ and
	$k=3$ 
	$\mathcal{C}_3^{(12)}\mathcal{H}_1\mathcal{C}_3^{(12)}\neq\mathcal{H}_1$
	$\mathcal{C}_3^{(123)}\mathcal{H}_1\mathcal{C}_3^{(123)}\neq\mathcal{H}_1$  
\end{prop}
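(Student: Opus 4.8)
The plan is to reduce all four assertions to a single bookkeeping rule for how the anti-linear conjugations act on the elementary constituents of $\mathcal{H}_1$, and then to read off each (in)equality from the explicit list of coefficients $B_{ij}$, $V_{ijkl}$. Following the reproducing-kernel technique already used in Propositions 1 and 2, I would first establish the operator identities $\mathcal{C}_3^{(j)}(\zeta_i\partial_{\zeta_j})\mathcal{C}_3^{(j)}$ and their quartic analogues by letting both sides act on the kernel $K^{[m_1,m_2,m_3]}_{\zeta_1,\zeta_2,\zeta_3}$, exactly as in the proof of Proposition 2. Writing each conjugation as the commuting product of a coefficient-conjugation $\mathcal{K}$ (the map $\psi\mapsto\overline{\psi(\bar\zeta)}$, which fixes every $\zeta_m$ and $\partial_{\zeta_m}$ but sends $i\to-i$) and the partial parities $\mathcal{P}_m$ (sending $\zeta_m\to-\zeta_m$, $\partial_{\zeta_m}\to-\partial_{\zeta_m}$), one obtains the clean rule: for a monomial term $c\,W$ in which the index $m$ occurs a total of $n_m$ times (counting both $\zeta_m$ and $\partial_{\zeta_m}$ factors), $\mathcal{C}_3^{(j)}(cW)\mathcal{C}_3^{(j)}=(-1)^{n_j}\bar c\,W$, and more generally $(-1)^{n_j+n_k}\bar c\,W$ for $\mathcal{C}_3^{(jk)}$ and $(-1)^{n_1+n_2+n_3}\bar c\,W$ for the global $\mathcal{C}_3^{(123)}$. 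Thus a term is preserved exactly when a real coefficient has even flipped-index count and an imaginary one has odd flipped-index count.

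The second step is pure classification. I would tabulate the nonzero coefficients and record that every real coefficient (the $B_{11},B_{22}$ entries and the $V_{1111},V_{2222},V_{1122},V_{1212},V_{1221},V_{3333},V_{1133},V_{2233}$ family) carries an even number of index-$1$ factors and an even number of index-$2$ factors, whereas every imaginary coefficient ($B_{12},B_{21}$, the $V_{1112}$ family, and $V_{1233},V_{2133}$) carries an odd $n_1$, an odd $n_2$, and an even $n_3$. From this single observation the first two rows follow termwise: for $j=1,2$ the parity $n_j$ is even on all real terms and odd on all imaginary terms, so $(-1)^{n_j}\bar c=c$ and $\mathcal{C}_3^{(1)}\mathcal{H}_1\mathcal{C}_3^{(1)}=\mathcal{C}_3^{(2)}\mathcal{H}_1\mathcal{C}_3^{(2)}=\mathcal{H}_1$. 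For the second-order operators the imaginary terms have $n_1+n_3$ and $n_2+n_3$ odd but $n_1+n_2$ even, so $\mathcal{C}_3^{(13)}$ and $\mathcal{C}_3^{(23)}$ preserve $\mathcal{H}_1$ while $\mathcal{C}_3^{(12)}$ sends each imaginary term to its negative, giving $\mathcal{C}_3^{(12)}\mathcal{H}_1\mathcal{C}_3^{(12)}\neq\mathcal{H}_1$. The global case is then immediate: every term of $\mathcal{H}_1$ has even total degree ($1+1$ for the bilinear part, $2+2$ for the quartic part), so $(-1)^{n_1+n_2+n_3}=+1$ on all terms and $\mathcal{C}_3^{(123)}\mathcal{H}_1\mathcal{C}_3^{(123)}=\mathcal{H}_1^{*}$, the coefficient-conjugate of $\mathcal{H}_1$, which differs from $\mathcal{H}_1$ precisely on the imaginary family, whence $\mathcal{C}_3^{(123)}\mathcal{H}_1\mathcal{C}_3^{(123)}\neq\mathcal{H}_1$.

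Two points need genuine care. First, the counting rule must be shown to be insensitive to operator ordering in the quartic terms $\zeta_i\partial_{\zeta_j}\zeta_k\partial_{\zeta_l}$: normal-ordering produces a contraction whenever $j=k$ via $[\partial_{\zeta_m},\zeta_m]=1$, but each contraction deletes one $\zeta_m$ and one $\partial_{\zeta_m}$ together and so preserves every $n_m$ modulo $2$. I would isolate this as a short parity lemma so that all bookkeeping may be carried out on the unordered expression without loss.

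Second, and this is the step I expect to be the real obstacle, the single-index claim for $j=3$ is the delicate one. The imaginary terms contain no $\zeta_3$ or $\partial_{\zeta_3}$ of odd multiplicity (their $n_3$ is $0$ or $2$), so the coefficient-conjugation inside $\mathcal{C}_3^{(3)}$ is not compensated by any sign from $\mathcal{P}_3$; the naive parity count then maps each imaginary term to its negative rather than to itself. Establishing $\mathcal{C}_3^{(3)}\mathcal{H}_1\mathcal{C}_3^{(3)}=\mathcal{H}_1$ therefore hinges on producing a compensating mechanism for exactly this $B_{12}$/$V_{1112}$/$V_{1233}$ family, and this is where the convention and the detailed action of the conjugation on the $\mathfrak{su}_\gamma(2)$-sector must be scrutinized most closely before the equality can be asserted.
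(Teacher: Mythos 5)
Your strategy is the same one the paper uses --- conjugate the elementary monomials $\zeta_i\partial_{\zeta_j}$ and $\zeta_i\partial_{\zeta_j}\zeta_k\partial_{\zeta_l}$ through the kernel $K^{[m_1,m_2,m_3]}_{\zeta_1,\zeta_2,\zeta_3}$ and assemble the Hamiltonian termwise --- but you systematize it into a single parity rule, $\mathcal{C}_3^{(j)}(cW)\mathcal{C}_3^{(j)}=(-1)^{n_j}\bar{c}\,W$, which the paper never states. Your rule is correct: it reproduces exactly the two identities the paper does compute, namely $\mathcal{C}_3^{(1)}(u_1\partial_{u_1})\mathcal{C}_3^{(1)}=u_1\partial_{u_1}$ (real coefficient, $n_1=2$) and $\mathcal{C}_3^{(1)}(iu_1\partial_{u_2})\mathcal{C}_3^{(1)}=iu_1\partial_{u_2}$ (imaginary coefficient, $n_1=1$). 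Your normal-ordering lemma (contractions delete a $\zeta_m,\partial_{\zeta_m}$ pair and so preserve each $n_m$ mod $2$) is a point the paper silently skips, and your classification of the nonzero $B_{ij}$, $V_{ijkl}$ into an all-even real family and an ($n_1$ odd, $n_2$ odd, $n_3$ even) imaginary family is accurate. This cleanly establishes the cases $j=1$, $j=2$, $(13)$, $(23)$, and both negative claims for $(12)$ and $(123)$, which is more than the paper's proof actually demonstrates (it verifies only the two $\mathcal{C}_3^{(1)}$ identities and appeals to ``similar fashion'' for everything else).

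The obstacle you flag for $j=3$ is genuine, and you should know that the paper does not resolve it either. Running the paper's own kernel computation with $\mathcal{C}_3^{(3)}$ in place of $\mathcal{C}_3^{(1)}$ gives $\mathcal{C}_3^{(3)}(iu_1\partial_{u_2})\mathcal{C}_3^{(3)}K=-\,iu_1\partial_{u_2}K$, since no factor of $u_3$ or $\partial_{u_3}$ is present to absorb the sign produced by $i\mapsto -i$; the same happens to every member of the imaginary family ($B_{12},B_{21}$, the $V_{1112}$ octet, and $V_{1233},V_{2133}$, all of which have $n_3\in\{0,2\}$), and the pairing $B_{12}\zeta_1\partial_{\zeta_2}+B_{21}\zeta_2\partial_{\zeta_1}$ provides no cancellation because $B_{12}=B_{21}$. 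So there is no ``compensating mechanism'' to be found: with the coefficients as listed, $\mathcal{C}_3^{(3)}\mathcal{H}_1\mathcal{C}_3^{(3)}\neq\mathcal{H}_1$ for $\gamma\neq 0$, and the $j=3$ sub-claim (together with the $\mathcal{C}_3^{(3)}$ rows of the subsequent eigenstate tables, e.g.\ the assertion that $\vert\mathbf{\psi_4}\rangle=\zeta_1^2-\zeta_2^2+\frac{2i}{\gamma}\zeta_1\zeta_2$ is $\mathcal{C}_3^{(3)}$-symmetric) is inconsistent with the paper's own definition of $\mathcal{C}_3^{(3)}$. Your refusal to assert the $j=3$ equality is therefore the right call; the gap lies in the proposition as stated, not in your argument, and the honest conclusion is that only $j=1,2$ survive among the first-order operators unless the paper intends some unstated modification of $\mathcal{C}_3^{(3)}$.
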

\begin{proof}
	We shall only verify the that
	
	 $\mathcal{C}_3^{(1)}u_1\partial_{u_1}\mathcal{C}_3^{(j)}K^{[m_1, m_2,m_3]}_{\zeta_1, \zeta_2,\zeta_3}(u_1, u_2, u_3)=u_1\partial_{u_1}K^{[m_1, m_2, m_3]}_{\zeta_1, \zeta_2, u_3}(u_1, u_2, u_3)$ and
	 
	  $\mathcal{C}_3^{(1)}iu_1\partial_{u_2}\mathcal{C}_3^{(j)}K^{[m_1, m_2,m_3]}_{\zeta_1, \zeta_2,\zeta_3}(u_1, u_2, u_3)=iu_1\partial_{u_2}K^{[m_1, m_2, m_3]}_{\zeta_1, \zeta_2, u_3}(u_1, u_2, u_3)$through the following steps

\begin{eqnarray}
\mathcal{C}_3^{(1)}u_1\partial_{u_1}\mathcal{C}_3^{(1)}K^{[m_1, m_2, m_3]}_{\zeta_1, \zeta_2, \zeta_3}(u_1, u_2, u_3)\nonumber\\
=\mathcal{C}_3^{(1)}u_1\partial_{u_1}[(-u_1)^{m_1}e^{-u_1\zeta_1}]\prod_{j\neq 1}^3u_j^{m_j}e^{u_j\zeta_j}\nonumber\\
=\mathcal{C}_3^{(1)}u_1(-1)^{m_1}[m_1u_1^{-1}+(-\zeta_1)]u_1^{m_1}e^{-u_1\zeta_1}\prod_{j\neq 1}^3u_j^{m_j}e^{u_j\zeta_j}\nonumber\\
=u_1\partial_{u_1}K^{[m_1, m_2, m_3]}_{\zeta_1, \zeta_2, u_3}(u_1, u_2, u_3)
\end{eqnarray}
and 
\begin{eqnarray}
\mathcal{C}_3^{(1)}iu_1\partial_{u_2}\mathcal{C}_3^{(1)}K^{[m_1, m_2, m_3]}_{\zeta_1, \zeta_2, \zeta_3}(u_1, u_2, u_3)\nonumber\\
=\mathcal{C}_3^{(1)}iu_1\partial_{u_2}[(-u_1)^{m_1}e^{-u_1\zeta_1}]\prod_{j\neq 1}^3(u_j)^{m_j}e^{u_j\zeta_j}\nonumber\\
=\mathcal{C}_3^{(1)}iu_1(-1)^{m_1}[(u_1)^{m_1}e^{-u_1\zeta_1}][m_2u_2^{-1}+\zeta_2]\prod_{j\neq 1}^3(u_j)^{m_j}e^{u_j\zeta_j}\nonumber\\
=iu_1\partial_{u_2}K^{[m_1, m_2, m_3]}_{\zeta_1, \zeta_2, u_3}(u_1, u_2, u_3)
\end{eqnarray}	
Using these and similar results in the expression of the Hamiltonian we can verify part-1 of the proposition-1. The other parts can be proved in similar fashion.
\end{proof}

The eigenvector space can be configured in terms of $\partial_{\mathcal { PT}}$-symmetric states, $\partial_{\mathcal { PT}}$-symmetry breaking states and $\partial_{\mathcal { PT}}$-adopting states corresponding to the actions of different types of symmetry operators.

For $n=2$

\vspace{5mm}
\begin{tabular}{|c|c|c|} \hline\hline\hline
	\multicolumn{1}{p{2cm}}{$\partial_{\mathcal { PT}}$-Symmetry Operator} &
	\multicolumn{1}{p{3cm}}{$\partial_{\mathcal { PT}}$-Symmetric State} &
	\multicolumn{1}{p{3cm}}{$\partial_{\mathcal { PT}}$-Symmetry Breaking State}\\ 
	 \hline\hline\hline
	 $\mathcal{C}_3^{(1)}$ & $\vert\mathbf{\psi_1}\rangle, \vert\mathbf{\psi_2}\rangle, \vert\mathbf{\psi_3}\rangle, \vert\mathbf{\psi_4}\rangle, \vert\mathbf{\psi_5}\rangle, \vert\mathbf{\psi_6}\rangle$ & \textrm{none}\\ \hline\hline\hline
	 $\mathcal{C}_3^{(2)}$ & $\vert\mathbf{\psi_1}\rangle, \vert\mathbf{\psi_4}\rangle, \vert\mathbf{\psi_5}\rangle, \vert\mathbf{\psi_6}\rangle$ & $\vert\mathbf{\psi_2}\rangle, \vert\mathbf{\psi_3}\rangle$ \\ \hline\hline\hline
	 $\mathcal{C}_3^{(3)}$ & $\vert\mathbf{\psi_1}\rangle, \vert\mathbf{\psi_4}\rangle, \vert\mathbf{\psi_5}\rangle, \vert\mathbf{\psi_6}\rangle$ & $\vert\mathbf{\psi_2}\rangle, \vert\mathbf{\psi_3}\rangle$\\ \hline\hline\hline
	 $\mathcal{C}_3^{(13)}$ & $\vert\mathbf{\psi_1}\rangle, \vert\mathbf{\psi_4}\rangle, \vert\mathbf{\psi_5}\rangle, \vert\mathbf{\psi_6}\rangle$ & $\vert\mathbf{\psi_2}\rangle, \vert\mathbf{\psi_3}\rangle$\\ \hline\hline\hline
	 $\mathcal{C}_3^{(23)}$ & $\vert\mathbf{\psi_1}\rangle, \vert\mathbf{\psi_2}\rangle, \vert\mathbf{\psi_3}\rangle, \vert\mathbf{\psi_4}\rangle, \vert\mathbf{\psi_5}\rangle, \vert\mathbf{\psi_6}\rangle$ & \textrm{none} \\ \hline\hline\hline
	 \end{tabular}

\vspace{5mm}

\begin{remark}
It is interesting to note that the present Hamiltonian $\mathcal{H}_1$ is not $\partial_{\mathcal { PT}}$-symmetric under the action of $\mathcal{C}_3^{(12)}$ though the states $\vert\mathbf{\psi_1}\rangle, \vert\mathbf{\psi_5}\rangle$ are found to be $\partial_{\mathcal { PT}}$-symmetric while $\vert\mathbf{\psi_2}\rangle, \vert\mathbf{\psi_3}\rangle,   \vert\mathbf{\psi_4}\rangle, \vert\mathbf{\psi_6}\rangle$ are not. One may call the states $\vert\mathbf{\psi_1}\rangle, \vert\mathbf{\psi_5}\rangle$ as $\partial_{\mathcal { PT}}-\textbf{symmetry adopting}$ states. Exactly same observation holds for the operator $\mathcal{C}_3^{(123)}$ which is also a global ${\mathcal { PT}}$ operator for the system. Here again $\mathcal{C}_3^{(123)} \mathcal{H}_1\mathcal{C}_3^{(123)}\neq \mathcal{H}_1 $ still the states $\vert\mathbf{\psi_1}\rangle, \vert\mathbf{\psi_5}\rangle$ are symmetric under its action. .  
\end{remark}

For $n=3$ the following observation can be made

\vspace{5mm}
\begin{tabular}{|c|p{6cm}|p{6cm}|} \hline\hline\hline
	\multicolumn{1}{p{2cm}}{$\partial_{\mathcal { PT}}$-Symmetry Operator} &
	\multicolumn{1}{p{3cm}}{$\partial_{\mathcal { PT}}$-Symmetric State} &
	\multicolumn{1}{p{3cm}}{$\partial_{\mathcal { PT}}$-Symmetry Breaking State}\\ 
	\hline\hline\hline
	$\mathcal{C}_3^{(1)}$ & $\vert\mathbf{\chi_1}\rangle, \vert\mathbf{\chi_8}\rangle, \vert\mathbf{\chi_{10}}\rangle $ & $\vert\mathbf{\chi_2}\rangle, \vert\mathbf{\chi_{3}}\rangle, \vert\chi_4\rangle, \vert\mathbf{\chi_5}\rangle, \vert\mathbf{\chi_6}\rangle, \vert\chi_7\rangle, \vert\chi_9\rangle $\\ \hline\hline\hline
	$\mathcal{C}_3^{(2)}$ & $\vert\mathbf{\chi_1}\rangle, \vert\mathbf{\chi_2}\rangle, \vert\mathbf{\chi_{3}}\rangle,\vert\mathbf{\chi_7}\rangle, \vert\mathbf{\chi_9}\rangle$ & $\vert\mathbf{\chi_4}\rangle, \vert\mathbf{\chi_5}\rangle, \vert\mathbf{\chi_6}\rangle, \vert\mathbf{\chi_8}\rangle, \vert\mathbf{\chi_{10}}\rangle$\\ \hline\hline\hline
	$\mathcal{C}_3^{(3)}$ & $ \vert\mathbf{\chi_7} \rangle,\vert\mathbf{\chi_8}\rangle, \vert\mathbf{\chi_9}\rangle,\vert\mathbf{\chi_{10}}\rangle $ & $\vert\mathbf{\chi_1}\rangle,\vert\mathbf{\chi_2}\rangle, \vert\mathbf{\chi_{3}}\rangle,\vert\mathbf{\chi_4}\rangle, \vert\mathbf{\chi_5}\rangle,\vert\mathbf{\chi_6}\rangle$\\ \hline\hline\hline
	$\mathcal{C}_3^{(13)}$ & $ \vert\mathbf{\chi_4}\rangle,\vert\mathbf{\chi_5}\rangle, \vert\mathbf{\chi_6}\rangle,\vert\mathbf{\chi_8}\rangle, \vert\mathbf{\chi_{10}}\rangle $ & $\vert\mathbf{\chi_1}\rangle,\vert\mathbf{\chi_2}\rangle, \vert\mathbf{\chi_{3}}\rangle,\vert\mathbf{\chi_7}\rangle, \vert\mathbf{\chi_9}\rangle$\\ \hline\hline\hline
	$\mathcal{C}_3^{(23)}$ & $\vert\mathbf{\chi_2}\rangle, \vert\mathbf{\chi_{3}}\rangle, \vert\mathbf{\chi_4}\rangle,\vert\mathbf{\chi_5}\rangle, \vert\mathbf{\chi_6}\rangle,\vert\mathbf{\chi_7}\rangle, \vert\mathbf{\chi_9}\rangle $ & $\vert\mathbf{\chi_1}\rangle, \vert\mathbf{\chi_8}\rangle, \vert\mathbf{\chi_{10}}\rangle$  \\ \hline\hline\hline
\end{tabular}

\vspace{5mm} 

\begin{remark}
	$\mathcal{C}_3^{(12)}$ and $\mathcal{C}_3^{(123)}$ do not act as $\partial_{\mathcal { PT}}$ symmetry operators for the system. 
\end{remark}

\section{Comments}
The main objective of our present discussion is to construct boson realization of non-hermitian operators related to deformed linear and nonlinear algebras and a Hamiltonian obtained thereof from their fusion. One of the deformed algebras has been obtained with the aid of a bi-orthogonal system of vectors. Fusion of two such algebras leads to Higgs algebra of cubic polynomial type. The Hamiltonian of the system admitting Higgs algebra is found to be non-hermitian still admitting real eigenvalues. The notion of partial ${\mathcal {PT}}$-symmetry has been investigated for the Hamiltonian and its eigen states. 

%Finally the orthogonality of eigen states has been discussed considering a symmetry  induced inner-product space which is found to be indefinite in nature. The Krein space structure of such an inner-product space is also investigated. It is to be mentioned that several space and algebraic features of this kind of many body systems are not yet explored in non-hermitian domain and a number unusual features may crop up in any stage of its development especially in the context of partial ${\mathcal { PT}}$ symmetry. 

\section*{References}

\end{document}